\documentclass[11pt]{article}
\usepackage{latexsym,amsthm,amsmath,amssymb}
\usepackage[utf8x]{inputenc}

\newtheorem{theorem}{Theorem}
\newtheorem{lemma}{Lemma}

\pagestyle{empty}
\setlength{\textheight}{22cm}
\setlength{\topmargin}{-1cm}
\setlength{\textwidth}{165mm}
\setlength{\oddsidemargin}{-3mm}
\setlength{\unitlength}{1mm}

\usepackage{todonotes}
\usepackage{subcaption}
\usepackage{tikz}
\usepackage{makecell}
\usepackage{url}
\usepackage{hyperref}
\usepackage{enumitem}

\bibliographystyle{plainurl}

\tikzset{bul/.style={circle,fill=black,inner sep=1pt,draw=black}}
\tikzset{lab/.style={sloped,anchor=south,auto=false,inner sep=2pt,black}}
\tikzset{>=latex}

\newtheorem{observation}{Observation}

\begin{document}

\begin{center}
{\Large\textbf{Planar Steiner Orientation is NP-complete}}

\vspace{2ex}

\textbf{Moritz Beck, Johannes Blum, Myroslav Kryven,\\ Andre Löffler, and Johannes Zink}\\
Julius-Maximilians-Universit\"at W\"urzburg, Germany\\
\texttt{\{beck,blum,kryven,loeffler,zink\}@informatik.uni-wuerzburg.de}
\end{center}

\begin{abstract}
Many applications in graph theory are motivated by routing or flow problems.
Among these problems is \textsc{Steiner Orientation}:
given a mixed graph $G$ (having directed and undirected edges) and a set $T$ of $k$ terminal pairs in $G$,
is there an orientation of the undirected edges in~$G$ such that there is a directed path for every terminal pair in $T$?
This problem was shown to be $\mathit{NP}$-complete by Arkin and Hassin~\cite{arkin02}
and later $W[1]$-hard by Pilipczuk and Wahlstr\"{o}m~\cite{pilipczuk16}, parametrized by $k$.
On the other hand, there is an $\mathit{XP}$ algorithm by Cygan et al.~\cite{cygan13}
and a polynomial time algorithm for graphs without directed edges by Hassin and Megiddo~\cite{hassin89}.
Chitnis and Feldmann~\cite{feldmann18} showed
$W[1]$-hardness of the problem for
graphs of genus~1.

We consider a further restriction to planar
graphs and show $\mathit{NP}$-completeness.
\end{abstract}

\section{Introduction}
Consider the following routing problem on mixed graphs:

\noindent\begin{tabular}{|lp{.85\textwidth}|}
 \hline
&\textsc{Steiner Orientation}\\
\textbf{Input:} &
Mixed graph $G=(V,E\cup A)$ with undirected edges $E$ and directed arcs $A$,\\
&set $T \subseteq V \times V$ of $k$ terminal pairs\\
\textbf{Output:} &
Orientation of all edges in $E$, such that for every $(s,t) \in T$ there is an $s$-$t$-path in $G$\\
 \hline
\end{tabular}

First notice that there is a polynomial time algorithm by Hassin and Megiddo~\cite{hassin89}
for the case where $G$ has no directed edges ($A = \emptyset$).
On general graphs, this problem is in fact $\mathit{NP}$-complete, shown  by Arkin and Hassin~\cite{arkin02},
also providing an efficient algorithm for the special case of $k=2$.
Generalizing this result to $k\geq 1$, Cygan et al.~\cite{cygan13} gave a $n^{O(k)}$ time algorithm.
This means that \textsc{Steiner Orientation} is in $\mathit{XP}$, parametrized by the number of terminal pairs.
This raised the question if \textsc{Steiner Orientation} is \emph{fixed-parameter tractable}:
is there an algorithm with runtime $f(k)\cdot n^{O(1)}$ for some computable function $f$ only dependent on $k$?
Pilipczuk and Wahlstr\"{o}m~\cite{pilipczuk16} showed \textsc{Steiner Orientation} to be $W[1]$-hard in $k$,
disproving the existance of such an algorithm under common assumptions.

Considering $k$-\textsc{SAT}, Impagliazzo and
Paturi~\cite{Impagliazzo2001} introduced the \emph{Exponential Time
Hypothesis} (ETH):
the classic \textsc{SAT}-problem parametrized by the number of variables per clause does not have a subexponential algorithm.
When considering parametrized problems, ETH is often assumed in order to obtain bounds on runtimes.
Assuming ETH, Pilipczuk and Wahlstr\"{o}m~\cite{pilipczuk16} could also show that there is no $f(k)\cdot n^{o(k/\log k)}$
time algorithm for any computable $f$, showing that the $\textit{XP}$ algorithm by Cygan et al.~\cite{cygan13} is almost optimal.
Restricting the problem, Chitnis and Feldmann~\cite{feldmann18} showed that the problem remains $W[1]$-hard
and the $\mathit{XP}$ algorithm almost optimal even if the input graph has genus~1.
All hardness-proofs provided utilize non-planar instances. This leaves open the following question:
What is the computational complexity of \textsc{Steiner Orientation} on planar graphs?

In this work, we consider the \textsc{Planar Steiner Orientation} problem where $G$ is a planar graph.
As a first result on computational complexity, we show the following:

\begin{theorem}
\label{thm:theorem}
\textsc{Planar Steiner Orientation} is $\mathit{NP}$-complete.
\end{theorem}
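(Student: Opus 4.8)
The plan has two parts. Membership in $\mathit{NP}$ is immediate: an orientation of the undirected edges $E$ is a polynomial-size certificate, and given it we can check, by one graph search per terminal pair in the resulting directed graph, whether an $s$--$t$-path exists for every $(s,t)\in T$. So the real work is $\mathit{NP}$-hardness, for which I would give a polynomial-time many-one reduction from a planar variant of satisfiability, say \textsc{Planar 3-SAT} (or a monotone planar variant, which may be more convenient), exploiting its rectilinear layout: variables on a horizontal line, clauses drawn as boxes above and below the line, each joined to its variables by vertical ``legs'' that pairwise do not cross.

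From a formula $\varphi$ I would build a mixed planar graph $G_\varphi$ together with a terminal set $T$ so that $\varphi$ is satisfiable iff $G_\varphi$ admits a valid Steiner orientation. The building blocks are: (i) a \emph{variable gadget} for each $x$, whose essential feature is a distinguished undirected edge (or short undirected path) whose two orientations encode the truth values of $x$; (ii) \emph{wire gadgets}, built mostly from fixed directed arcs with a few undirected ``switch'' edges, that carry the value of a variable along a leg toward a clause, the orientation of the variable edge determining which of two channels of the wire admits a directed traversal; (iii) a \emph{clause gadget} for each clause $C$, containing a terminal pair whose $s$--$t$-connection can be realized through a branch of length $3$ iff at least one of the three incoming wires carries the value satisfying the corresponding literal; and, since a variable may occur in several clauses, (iv) a \emph{copy/fan-out gadget} that reproduces the orientation of a variable edge onto several outgoing wires. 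Each gadget is confined to a bounded region, so that, placed along the rectilinear drawing of $\varphi$ (with positive occurrences leaving a variable gadget upward and negative ones downward, as in the monotone case), the whole graph $G_\varphi$ is planar and no two wires cross.

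The correctness proof splits into two directions. Given a satisfying assignment, I would orient each variable edge according to its truth value, propagate the forced orientations through the wire and copy gadgets, orient the remaining undirected edges arbitrarily, and verify that every terminal pair --- the ``internal'' ones inside the variable, wire, and copy gadgets, and the one per clause --- is joined by a directed path; the clause terminal pair is connected precisely because a satisfied literal leaves its branch traversable. Conversely, from any valid orientation I would read a truth value off each variable edge and argue that each clause terminal pair can be connected only if at least one of that clause's branches is traversable, which forces at least one of its literals to be true, so $\varphi$ is satisfied.

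The main obstacle I anticipate is making the gadgets simultaneously \emph{planar} and \emph{rigid}: the wires and the copy gadget must force orientations to propagate with no ``leak'' --- the only way to connect all terminal pairs should be the intended, assignment-consistent one, with no shortcut route through a neighboring gadget. The delicate point is the copy/fan-out gadget: duplicating a single bit that is encoded as an edge direction cannot be done by ``switching off'' a wire (Steiner Orientation still demands \emph{some} path for every terminal pair), so it needs a careful interplay of directed arcs and auxiliary terminal pairs. If a purely monotone-planar layout does not suffice, the fallback is to reduce from ordinary \textsc{3-SAT} and insert an orientation \emph{crossover gadget} at each edge crossing, which concentrates the difficulty in designing that single gadget.
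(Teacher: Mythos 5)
Your overall strategy coincides with the paper's: membership in $\mathit{NP}$ is argued the same way, and the hardness part is a reduction from a planar (monotone, rectilinear) variant of 3-SAT with variable gadgets whose undirected edges encode truth values, wire-like gadgets along the legs, and a clause gadget whose single terminal pair is connectable iff some literal is satisfied. However, as written the proposal has a genuine gap: the entire content of such a reduction lies in the concrete gadget constructions, and precisely the piece you flag as ``the delicate point'' --- propagating and duplicating an orientation bit without allowing leaks, while keeping every terminal pair connectable under \emph{every} assignment --- is left unresolved. Without that gadget the equivalence ``$\varphi$ satisfiable iff the instance has a valid orientation'' cannot be established, so the reduction is only a plan, not a proof. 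The paper's answer is the \emph{flip gadget}: two terminal pairs and two undirected edges wired by directed arcs so that connecting both pairs forces the two undirected edges into \emph{opposite} orientations. Stacking two flip gadgets copies an orientation (a flip of a flip), which is exactly the edge/wire gadget; and the variable gadget is a ring of undirected edges synchronized by two terminal pairs so that the only feasible orientations are all-clockwise or all-counterclockwise, which yields fan-out to arbitrarily many clause legs for free, with positive occurrences attached above and negative ones below --- this is where monotonicity of the rectilinear layout is used, so no crossover gadget is ever needed (your fallback of designing a planar ``orientation crossover'' is itself an unsubstantiated and likely harder task).

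Two further points that your sketch asserts but does not argue, and which the paper treats explicitly, are: (i) the clause gadget must not admit a spurious $s$--$t$-path when no literal is true; the paper prevents this by synchronizing two internal undirected edges $f$ and $g$ through two additional flip gadgets so that at most one of them can be used on an $s$--$t$-path; and (ii) the ``no shortcut through a neighboring gadget'' property (rigidity/no leak) is not automatic --- the paper proves a self-containment lemma for all gadgets from the simple observation that every source terminal has indegree zero and every target terminal has outdegree zero, so any simple path serving a gadget's terminal pair stays inside that gadget. Your forward direction (``orient the remaining undirected edges arbitrarily'') is fine only once such rigidity and self-containment are in place; with the paper's gadgets these verifications are short, but they are part of the proof and cannot be omitted.
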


\section{Hardness Proof}
To prove Theorem~\ref{thm:theorem}, we give a reduction from \textsc{Planar Monotone 3-SAT},
introduced by de~Berg et al. \cite{berg12} and known to be $\mathit{NP}$-complete.
We use different gadgets for variables, clauses and edges.
These are stitched together at shared undirected edges.
Given a planar monotone 3-SAT formula~$F$, we use these gadgets to create an instance of
\textsc{Planar Steiner Orientation} resembling the incidence graph of $F$ with $\lvert T \rvert$ polynomial in the size of $F$.
Without loss of generality we assume that every variable of $F$ occurs both negated and unnegated.

Figure~\ref{fig:flip_gadget} shows a \emph{flip gadget}, a building block used in other gadgets.
It contains two terminal pairs $(s_1, t_1)$ and $(s_2, t_2)$ and two undirected (red) edges.
Connecting both pairs will result in opposing directions for the two undirected edges.

For every variable $x$ in $F$, we have a \emph{variable gadget} (see Figure~\ref{fig:variable_gadget}).
It mimics the flip gadget, providing an undirected edge $e^x_C$ for every positive/negative clause $C$ containing $x$ above/below the terminal pairs respectively.
We say that the gadget is (false) true if the undirected edges are oriented (counter-)clockwise.
		No other orientation allows connecting both terminal pairs.

For every clause $C$, we have a \emph{clause gadget} (see Figure~\ref{fig:clause_gadget}).
It contains a terminal pair $(s, t)$ and has an undirected edge $\overline{ e}^w_C$ for each variable $w$ it contains.
The undirected edge $\overline{ e}^y_C$ in the middle is flipped to get a consistent orientation for variables set to true.
The edges $f$ and $g$ are synchronized by two flip gadgets to ensure that at most one of them is used to connect $(s,t)$.
For clauses with only two variables we simply replace the edge $\tilde{e}^y_C$
with an arc from left to right and omit the attached flip gadget
(see Figure~\ref{fig:full-example}, clause $(X\vee Y)$).
It is easy to see that this way no new possibilities for $s$-$t$-paths are created,
keeping the gadget valid.

We use two stacked flip gadgets as \emph{edge gadgets} resembling the variable-clause-incidences.
By reversing the direction twice, we synchronize the two red edges $e^x_C$ and $\overline{ e}^x_C$ for all $x$ and $C$.

An important property of our construction is that all the gadgets that we use are \emph{self-contained}, which means that for each gadget any simple path connecting an $(s, t)$-pair of the gadget stays inside the gadget.
We state the following simple observation:

\begin{observation}
	Every source $s$ has indegree zero and every target $t$ has outdegree zero.\label{obs:indeg}
\end{observation}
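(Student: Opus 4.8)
The plan is to establish the observation purely by inspecting the construction: it is a structural property of the instance we build and does not depend on the orientation that is later chosen. First I would reduce the claim to a local statement about the individual gadgets. The instance consists of copies of the flip, variable, clause, and edge gadgets, which are combined only by identifying certain shared undirected edges (for instance, the red edges $e^x_C$ and $\overline{e}^x_C$ where the edge gadgets attach to the variable and clause gadgets, and the edges $f$ and $g$ inside a clause gadget). So I would begin by recording the bookkeeping fact that no such shared edge is incident to a source or a target; consequently, for every source $s$ and every target $t$, all arcs and undirected edges incident to it already lie inside a single gadget, and no cross-gadget incidence can arise.

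It then remains to inspect the four gadget types in isolation. I would go through the flip gadget (Figure~\ref{fig:flip_gadget}), the variable gadget (Figure~\ref{fig:variable_gadget}), the clause gadget (Figure~\ref{fig:clause_gadget}, together with the two flip gadgets nested inside it), and the edge gadget (two stacked flip gadgets), and in each case read off directly from the figure that every vertex occurring as the first coordinate of a terminal pair is incident only to outgoing arcs --- in particular to no undirected edge --- so that its indegree is $0$, and symmetrically that every vertex occurring as a second coordinate of a terminal pair is incident only to incoming arcs, so that its outdegree is $0$. Taking the union over all gadgets then gives the observation.

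I do not expect a real obstacle, since the statement is intentionally simple. The only place that needs a little care is the bookkeeping step: one has to check, against the precise placement of the red edges and of the edges $f$ and $g$ in the variable, clause, and edge gadgets, that a stitching edge never touches a terminal vertex --- otherwise orienting such an edge into a source, or out of a target, could violate the observation. Once that is verified, the rest is immediate from the figures.
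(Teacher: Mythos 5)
Your proposal is correct and matches the paper's (implicit) justification: the paper states this as a simple observation to be read off the construction, and your argument is exactly that inspection, gadget by gadget. The extra bookkeeping step you include --- checking that the shared red stitching edges ($e^x_C$, $\overline{e}^x_C$, and the edges $f$, $g$) are never incident to a terminal vertex --- is a sensible and accurate addition, since in all figures those undirected edges join only internal (non-terminal) vertices.
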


We use this to prove the following lemma.

\begin{lemma}
	\label{lem:self-containment}
	In our construction each clause, edge, and variable gadget is self-contained.
\end{lemma}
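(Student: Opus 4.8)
The plan is to argue by contradiction. Suppose some gadget $H$ of the construction admits a simple path $P$ connecting one of its terminal pairs $(s,t)$ that leaves $H$ at some point, and derive a contradiction from the way the gadgets are glued together.

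First I would collect the structural facts I will use. Distinct gadgets share only undirected edges and the sources and targets of a gadget lie in its interior, so $P$ can move between gadgets only by traversing a shared edge, and since $t\in H$, if $P$ ever leaves $H$ it must also return to $H$, each time crossing one of the finitely many shared edges of $H$. Because $P$ is simple it never traverses the same (now oriented) shared edge twice. Finally, by Observation~\ref{obs:indeg} the path $P$ meets no terminal besides $s$ and $t$: an interior source would need positive indegree and an interior target positive outdegree. Hence, whenever $P$ lies inside a gadget other than $H$, it uses only non-terminal vertices of that gadget.

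The core of the argument is the following claim $(\star)$: for every gadget $H'$ of the construction and any two \emph{distinct} shared edges $e_1,e_2$ of $H'$, there is no orientation of the undirected edges of $H'$ admitting a simple path inside $H'$ from an endpoint of $e_1$ to an endpoint of $e_2$ that avoids every terminal of $H'$. Granting $(\star)$, the contradiction is immediate. Let $e$ be the shared edge through which $P$ first leaves $H$, entering a neighbouring gadget $H'\neq H$. Since $t\in H$, the path $P$ must later leave $H'$, necessarily across a shared edge of $H'$; this edge cannot be $e$ again, so it is some $e'\neq e$. But then the portion of $P$ lying in $H'$ is a terminal-free simple path from an endpoint of $e$ to an endpoint of $e'$, contradicting $(\star)$. (If $P$ could not leave $H'$ at all, it could never reach $t$, which is already absurd.)

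Proving $(\star)$ is the step I expect to be the real obstacle; the rest is bookkeeping. Here I would treat the three gadget types (variable, clause, edge) one at a time, exploiting that each is assembled from flip gadgets whose own sources cannot be entered and whose own targets cannot be left (Observation~\ref{obs:indeg}). For each gadget and each of its shared edges $e_1$, I would determine which internal vertices are reachable from an endpoint of $e_1$ by a simple, terminal-avoiding walk under some consistent orientation, and verify that no endpoint of any other shared edge is among them; the flip-gadget terminals sitting on the relevant internal paths are exactly what severs these routes, while the additional terminal pair $(s,t)$ placed in the clause gadget and the arc that in some cases replaces its middle edge only remove further possibilities. This case analysis is finite and routine once the gadget figures are fixed; the delicate point is carrying out the reachability analysis through the flip-gadget machinery correctly, since all three gadgets inherit their internal wiring from it.
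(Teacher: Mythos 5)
There is a genuine gap: your core claim $(\star)$ is false for the clause gadget, so the reduction of the lemma to $(\star)$ cannot be completed. In Figure~\ref{fig:clause_gadget} the right endpoint of the shared edge $\overline{e}^z_C$ has an outgoing arc to an (uncolored, non-terminal) corner vertex, which leads by a second arc to another non-terminal corner vertex, which in turn has an arc into the left endpoint of the shared edge $\overline{e}^x_C$. This directed path uses only arcs of the clause gadget, avoids every terminal of the clause gadget (the corner vertices are ordinary vertices, not members of any terminal pair), and joins endpoints of two \emph{distinct} shared edges --- indeed it is deliberately part of the intended $s$--$t$ route $s \to \overline{e}^z_C \to \cdots \to \overline{e}^x_C \to t$ through the clause. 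So no amount of case analysis will verify $(\star)$: a path entering a clause gadget from the edge gadget of $z$ really can travel across the clause gadget to the boundary of the edge gadget of $x$ without meeting a terminal. A secondary imprecision is your premise that gadgets can only be exchanged by \emph{traversing} a shared edge; since gadgets also share the endpoints of these edges, a path can switch gadgets at such a vertex without using the red edge at all, so ``the shared edge through which $P$ first leaves $H$'' need not exist, and the case of re-entering $H$ at the other endpoint of the same shared edge is not excluded by $(\star)$ as stated.

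The lemma itself survives because the paper's mechanism is different and weaker than $(\star)$: it does not forbid terminal-free travel between shared edges \emph{inside} a gadget, but instead observes that the first edge by which a simple path exits a gadget is an arc of a neighbouring gadget, and every such arc ends in a \emph{target} terminal of that neighbouring gadget (this is how the flip-gadget endpoints of the red edges are wired). By Observation~\ref{obs:indeg} that target has outdegree zero, so the escaping path is a dead end and can never return to complete the terminal pair of the original gadget. If you want to salvage your structure, the claim to prove is of this ``once you leave, you get stuck at a foreign target'' form (a statement about the arcs incident to the shared edges from the other gadget's side), not a statement forbidding boundary-to-boundary connections within a gadget.
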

\begin{proof}
	\textbf{Clause gadgets.} Assume there is a simple path connecting a terminal pair of a clause gadget, which is not fully contained within the gadget. Then there must be an edge that leaves the clause gadget and due to the structure, this edge must be part of an edge gadget. But all edges leaving a clause gadget and entering an edge gadget end in some target terminal, so by Observation~\ref{obs:indeg} the path cannot re-enter the clause gadget.

	\textbf{Edge gadgets.} Consider a simple path that leaves an edge gadget. If the leaving edge is part of a clause gadget, the path leads to a target terminal within the clause gadget or within another edge gadget, from where it cannot re-enter the original edge gadget.  The case where the leaving edge is part of a variable gadget is similar.

	\textbf{Variable gadgets.} Consider a simple path that leaves a variable gadget. Then the leaving edge is part of an edge gadget and leads to a target terminal, so the path cannot re-enter the variable gadget.\qedhere

\end{proof}

\begin{lemma}
		All terminal pairs of a clause gadget corresponding to a clause $C$ can be connected if and only if at least one of the edges in $\{ \overline{e}^x_C, \overline{e}^y_C, \overline{e}^z_C\}$ is directed to the~right.
\end{lemma}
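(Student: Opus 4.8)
The plan is to verify the two implications separately, in each case reducing the global statement about connecting \emph{all} terminal pairs to a purely local analysis inside the clause gadget. This reduction is licensed by Lemma~\ref{lem:self-containment}: since the clause gadget is self-contained, each of its terminal pairs (the clause pair $(s,t)$ together with the two pairs contributed by the two internal flip gadgets) can be connected if and only if it can be connected by a simple path staying inside the gadget, independently of the orientation chosen elsewhere. So throughout I work only with the edges of the clause gadget, and I record the auxiliary fact that the flip-gadget pairs are always simultaneously connectable (this is the defining property of a flip gadget); hence the real content of the equivalence concerns the clause pair $(s,t)$ together with the consistency constraints the flip gadgets impose on $f$ and $g$.

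For the ``if'' direction, assume without loss of generality that $\overline{e}^x_C$ points to the right; the cases of $\overline{e}^z_C$ and of the (mirror-placed, flipped) middle edge $\overline{e}^y_C$ are symmetric. I would first fix the orientations inside the two internal flip gadgets so that their terminal pairs are connected; by the flip-gadget property this simultaneously pins down the orientations of $f$ and $g$. I then exhibit one explicit $s$-$t$ path in Figure~\ref{fig:clause_gadget}: it leaves $s$, is routed via the arcs of the gadget and the now-fixed orientation of $f$ to the left endpoint of $\overline{e}^x_C$, crosses $\overline{e}^x_C$ from left to right, and continues via $g$ to $t$. The remaining variable edges $\overline{e}^w_C$ with $w\neq x$ play no role and may be oriented arbitrarily. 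Checking that this path actually exists is the routine part and amounts to following the arcs of the figure.

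For the ``only if'' direction I prove the contrapositive: if all three of $\overline{e}^x_C,\overline{e}^y_C,\overline{e}^z_C$ point to the left, then no orientation of the remaining edges connects $(s,t)$, so not all terminal pairs can be connected. Fix an arbitrary orientation. By self-containment together with Observation~\ref{obs:indeg}, an $s$-$t$ path is a simple directed path inside the gadget that never passes through a source or a target of either internal flip gadget. I then enumerate the possible ``shapes'' of such a path: structurally, getting from the $s$-side to the $t$-side of the gadget requires either traversing one of the three variable edges or routing through $f$ and/or~$g$. The synchronization of $f$ and $g$ by the two flip gadgets guarantees that at most one of $f,g$ is oriented the way an $s$-$t$ path needs; a short case check then shows that every route which does not cross some $\overline{e}^w_C$ from left to right gets stuck, in that it would have to re-enter through a flip-gadget source or terminate at a flip-gadget target, contradicting Observation~\ref{obs:indeg}. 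Since by assumption no variable edge is oriented left-to-right, every candidate path is blocked.

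The main obstacle is this ``only if'' direction: one must be sure that the $f$/$g$ flip-gadget machinery genuinely rules out the ``cheating'' $s$-$t$ routes that would bypass all three variable edges, and that the routes using exactly one of $f,g$ are still forced to cross a variable edge in the forbidden direction. This is precisely the purpose of synchronizing $f$ and $g$, and the argument hinges on tracking which of the two is forced to point ``the wrong way'' once the flip-gadget terminal pairs are connected, again using Observation~\ref{obs:indeg} to forbid paths through flip-gadget terminals. The ``if'' direction, by contrast, is essentially mechanical once a single consistent orientation and a single path have been produced.
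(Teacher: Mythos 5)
Your overall strategy is exactly the paper's: use Lemma~\ref{lem:self-containment} and Observation~\ref{obs:indeg} to restrict attention to simple paths inside the clause gadget, note that the flip-gadget pairs can only be connected internally and that this synchronizes $f$ and $g$, and then prove the equivalence for the single pair $(s,t)$, the ``only if'' direction by contraposition. However, your ``if'' direction contains a concrete error. Connecting the flip-gadget pairs does \emph{not} ``pin down'' the orientations of $f$ and $g$; it only forces them to be equal (both upwards or both downwards), and the prover must still \emph{choose} the common direction to suit the case at hand ($f$ pointing away from $s$ when $\overline{e}^x_C$ points right, $g$ pointing towards $t$ when $\overline{e}^z_C$ points right). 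Consequently, the witness path you exhibit for the case that $\overline{e}^x_C$ points right --- leave $s$ via $f$, cross $\overline{e}^x_C$ left to right, then ``continue via $g$ to $t$'' --- does not exist: using $f$ to leave $s$ forces $f$ upwards, the synchronization you invoke then forces $g$ upwards as well, so $g$ cannot be used to enter $t$. The correct path takes the arc from the right endpoint of $\overline{e}^x_C$ directly to $t$ and never touches $g$. So the step you call ``routine'' fails as written, even though the repair is small.

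A second looseness concerns the middle edge. The three cases are not symmetric: the case of $\overline{e}^y_C$ uses neither $f$ nor $g$, but relies on the small flip gadget forcing $\tilde{e}^y_C$ to be oriented opposite to $\overline{e}^y_C$, so the path runs from $s$ over $\tilde{e}^y_C$ (directed to the left) straight to $t$. The same forced orientation is what blocks this route in the ``only if'' direction: when $\overline{e}^y_C$ points left, $\tilde{e}^y_C$ points right and cannot be used when leaving $s$. Your contrapositive attributes all blocking to flip-gadget sources/targets and to the $f$/$g$ synchronization, which does not cover this route. The paper's argument handles it explicitly: leaving $s$, neither $\tilde{e}^y_C$ nor $\overline{e}^z_C$ is usable, hence $f$ must point away from $s$, hence so does $g$, and then neither $\overline{e}^x_C$ nor $g$ can bring the path to $t$. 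With these two corrections your proposal coincides with the paper's proof.
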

\begin{proof}
	According to Lemma~\ref{lem:self-containment} it suffices to consider only paths within the clause gadget as there is a path connecting a terminal pair if and only if there is a simple path connecting this terminal pair.
	The terminal pairs of the flip gadgets can be connected within the flip gadgets.
	They cannot be connected otherwise within the clause gadget since they would have to pass through a source or target vertex then, which contradicts Observation~\ref{obs:indeg}.
	The edges $f$ and $g$ are both directed upwards or downwards because of the two flip gadgets between them.
	Hence it suffices to show the equivalence for the pair $(s,t)$.

	``$\Leftarrow$'':
	Case 1: If $\overline e^x_C$ is directed to the right, orient the edge $f$ away from $s$.
	Case 2: If $\overline e^y_C$ is directed to the right, $\tilde{e}^y_C$ is directed to the left.
	Case 3: If $\overline e^z_C$ is directed to the right, orient the edge $g$ pointing to $t$.
	In each of these cases $s$ is connected to $t$.

	``$\Rightarrow$'': By contraposition.
	As we move away from $s$, we can neither use the edge $\tilde{e}^y_C$ nor $\overline e^z_C$.
	Thus we have to use $f$ which means it points away from $s$.
	To come to $t$ we have to use one of the edges $\overline e^x_C$ or $g$.
	This is impossible.
\end{proof}

The next two lemmas follow immediately from the gadget structure and Lemma~\ref{lem:self-containment}.

\begin{lemma}
The only way to connect all terminal pairs of a variable gadget is to orient all edges clockwise or counterclockwise.
\end{lemma}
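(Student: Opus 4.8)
The plan is to derive the lemma directly from the structure of the variable gadget together with Lemma~\ref{lem:self-containment} and Observation~\ref{obs:indeg}, just as the paper indicates. First I would invoke self-containment: since the variable gadget is self-contained, for each of its terminal pairs a connecting path exists if and only if a \emph{simple} connecting path exists that stays entirely inside the gadget. So it suffices to analyze the gadget in isolation.

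Next I would use the structure shown in Figure~\ref{fig:variable_gadget}: the variable gadget is a flip gadget in which each of the two undirected edges has been subdivided into a chain of undirected edges $e^x_C$ — one per clause $C$ containing $x$, placed above the terminal pairs for positive clauses and below for negative ones. Topologically the gadget is therefore a single cycle of fixed arcs and undirected edges carrying the two source/target pairs. The key claim is that any simple $s_i$-$t_i$ path must traverse this cycle monotonically in one rotational sense: by Observation~\ref{obs:indeg} it can enter no other source and leave no other target (indegree/outdegree zero rules out passing through them), and each fixed arc on the cycle can be used in only one direction; hence the only freedom left is the chosen orientation of the undirected edges, and for the path to be continuous every undirected edge it uses must point "forward" along its direction of travel.

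Then I would combine the two terminal pairs to force a globally consistent orientation. As in the plain flip gadget, the $s_1$-$t_1$ path covers one arc of the cycle and the $s_2$-$t_2$ path covers the complementary arc, and together they use every undirected edge $e^x_C$ of the gadget. If both paths run clockwise, all undirected edges are oriented clockwise; if both run counterclockwise, all are counterclockwise; any mixed choice would leave some $e^x_C$ pointing the wrong way for the path that needs it, so some terminal pair would be disconnected. Thus the only orientations connecting all terminal pairs are the all-clockwise and all-counterclockwise ones, which is exactly the statement (and matches the "true/false" labelling introduced earlier).

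The main obstacle I expect is the second step: making rigorous that the cycle "admits only one traversal direction" for the given placement of arcs, i.e. that no internal detour — reusing a vertex, or cutting across — can help a path. This reduces to Observation~\ref{obs:indeg} plus a short case check over the few internal vertices of the gadget; once monotonicity is in hand, the all-clockwise / all-counterclockwise conclusion is immediate, which is why the paper can state that the lemma "follows immediately from the gadget structure and Lemma~\ref{lem:self-containment}".
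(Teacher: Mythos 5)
Your proposal is correct and follows the route the paper intends: self-containment (Lemma~\ref{lem:self-containment}) reduces everything to simple paths inside the gadget, Observation~\ref{obs:indeg} rules out crossing between the upper and lower chains through a source or target, so each terminal pair must use one full chain, the two pairs must use different chains, and together they force all red edges into a single rotational orientation. Since the paper gives no further detail beyond ``follows immediately from the gadget structure and Lemma~\ref{lem:self-containment}'', your write-up is essentially the same argument, just spelled out.
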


\begin{lemma}
The only way to connect all terminal pairs of an edge gadget is to orient the two outer edges $e^x_C$ and $\overline{e}^x_C$ in the same direction. 
\end{lemma}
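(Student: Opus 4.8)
The plan is to argue entirely within a single edge gadget, which is legitimate by Lemma~\ref{lem:self-containment}: since the edge gadget is self-contained, a terminal pair of the gadget is connected in the whole construction if and only if it is connected by a simple path that never leaves the gadget. So it suffices to determine, for which orientations of the undirected edges of the edge gadget, all of its terminal pairs admit such internal paths. First I would fix notation: recall that an edge gadget consists of two flip gadgets stacked and glued along one shared undirected edge; call that shared middle edge $m$, let $e^x_C$ be the outer undirected edge of the upper flip gadget and $\overline{e}^x_C$ the outer undirected edge of the lower flip gadget. The one external ingredient is the defining property of the flip gadget (Figure~\ref{fig:flip_gadget}): both of its terminal pairs can be connected by paths internal to it if and only if its two undirected edges receive opposite orientations; here self-containment together with Observation~\ref{obs:indeg} is used to rule out detours through a source or a target vertex.

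For the forward direction, suppose all terminal pairs of the edge gadget are connected. Applying the flip-gadget property to the upper flip gadget forces $e^x_C$ and $m$ to point in opposite directions, and applying it to the lower flip gadget forces $m$ and $\overline{e}^x_C$ to point in opposite directions. Composing these two constraints, $e^x_C$ and $\overline{e}^x_C$ point in the same direction, which is the claim. To see that ``the only way'' is not vacuous I would also check the converse: if $e^x_C$ and $\overline{e}^x_C$ are oriented the same way, orient $m$ the opposite way; then each flip gadget has its two undirected edges oppositely oriented, so by the flip-gadget property all four terminal pairs of the two flip gadgets, i.e.\ all terminal pairs of the edge gadget, can be connected inside the gadget.

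The only real content here is the flip-gadget property; everything else is bookkeeping. Proving that property is a short, finite case distinction on the small fixed graph of Figure~\ref{fig:flip_gadget}: one traces the few possible internal $s_i$-$t_i$ paths and observes that the first pair traverses the two undirected edges in one orientation while the second pair traverses them in the opposite orientation, and that Observation~\ref{obs:indeg} blocks any path from routing through a source or target. Because that verification is entirely local, the claim indeed ``follows immediately from the gadget structure and Lemma~\ref{lem:self-containment}'', and the same template — with ``opposite orientation'' replaced by ``consistent orientation around the cycle'' — settles the preceding lemma on variable gadgets.

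The step I expect to be the main (though still modest) obstacle is making the flip-gadget property fully rigorous: one must be careful that the enumeration of internal $s_i$-$t_i$ paths is exhaustive and that the edge gadget really forces a path to pass through $m$ rather than allowing some shortcut between the two flip gadgets. Both are handled by self-containment and Observation~\ref{obs:indeg}, but stating them cleanly is where the proof actually needs to be written out rather than waved through.
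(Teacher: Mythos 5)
Your proposal is correct and matches the paper's intent: the paper gives no written proof, stating only that the lemma ``follows immediately from the gadget structure and Lemma~\ref{lem:self-containment}'', and your argument --- restrict to internal simple paths via self-containment and Observation~\ref{obs:indeg}, invoke the flip-gadget property (both pairs connectable iff its two undirected edges are oppositely oriented) for each of the two stacked flip gadgets, and chain the two ``opposite'' constraints through the shared middle edge to force $e^x_C$ and $\overline{e}^x_C$ into the same direction --- is exactly the intended reasoning, with the converse orientation check a welcome extra. No gaps; your write-up simply makes explicit what the paper leaves as immediate.
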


We can observe that there is always an orientation such that the terminal pairs of all variable and edge gadgets can be connected.
Provided that all terminal pairs of all variable and edge gadgets are connected, the terminal pairs of all clause gadgets can be connected if and only if the formula is satisfiable.
Thus, the \textsc{Planar Steiner Orientation} instance has a solution if and only if the corresponding \textsc{Planar Monotone 3-SAT} formula is satisfiable.
This proves Theorem~\ref{thm:theorem}.

\begin{figure}[tb]
 \begin{minipage}{\textwidth}
    \begin{subfigure}[b]{0.2\textwidth}
      \centering
      \captionsetup{justification=centering}
\begin{tikzpicture}[scale=.4]
 \draw (0,0)node[bul,label=below:{$t_2$}](t2){} (1,0)node[bul,label=right:{$s_1$}](s1){} (4,0)node[bul,label=left:{$s_2$}](s2){} (5,0)node[bul,label=above:{$t_1$}](t1){};
 \draw (1.5,-1.5)node[bul](ll){} (3.5,-1.5)node[bul](lr){} (1.5,1.5)node[bul](ul){} (3.5,1.5)node[bul](ur){};
 \draw [->](s1) to (ul);\draw [->] (s1) to (ll);
 \draw [->](s2) to (ur);\draw [->] (s2) to (lr);
 \draw [->](ul) to (t2);\draw [->] (ll) to (t2);
 \draw [->](ur) to (t1);\draw [->] (lr) to (t1);
 \draw [red,line width=2pt] (ul) -- (ur) (ll) -- (lr);
 \draw (0,-3.7)node[bul,draw=white,fill=white](){};
\end{tikzpicture}

       \caption{Flip gadget}
      \label{fig:flip_gadget}
    \end{subfigure}
    \hfill
    \begin{subfigure}[b]{0.35\textwidth}
      \centering
      \captionsetup{justification=centering}
      \begin{tikzpicture}[scale=.4]
 \draw (0,0)node[bul,label=below:{$t_2$}](t2){} (1,0)node[bul,label=right:{$s_1$}](s1){} (11,0)node[bul,label=left:{$s_2$}](s2){} (12,0)node[bul,label=above:{$t_1$}](t1){};
 \draw (1.5,-1.5)node[bul](ll){} (1.5,1.5)node[bul](ul){}
 (3.5,-1.5)node[bul](lm1){} (3.5,1.5)node[bul](um1){}
 (5,1.5)node[bul](um21){} (7,1.5)node[bul](um22){}
 (8.5,-1.5)node[bul](lm3){} (8.5,1.5)node[bul](um3){}
 (10.5,-1.5)node[bul](lr){} (10.5,1.5)node[bul](ur){};
 \draw [->](s1) to (ul);\draw [->] (s1) to (ll);
 \draw [->](s2) to (ur);\draw [->] (s2) to (lr);
 \draw [->](ul) to (t2);\draw [->] (ll) to (t2);
 \draw [->](ur) to (t1);\draw [->] (lr) to (t1);
 \draw [->,bend right=15](lm1) to (lm3); \draw [->,bend right=15](lm3) to (lm1);
 \draw [->,bend right=15](um1) to (um21);\draw [->,bend right=15](um21) to (um1);
 \draw [->,bend right=15](um22) to (um3);\draw [->,bend right=15](um3) to (um22);
 \draw [red, line width=2pt] (ul)--node[lab]{$e^x_{A}$} (um1) (um21)--node[lab]{$e^x_{B}$}(um22) (um3)--node[lab]{$e^x_{C}$}(ur)
 (ll)--node[lab,anchor=north]{$e^x_{D}$}(lm1) (lm3)--node[lab,anchor=north]{$e^x_{E}$}(lr);
\end{tikzpicture}
      \caption{Variable gadget}
      \label{fig:variable_gadget}
    \end{subfigure}
    \hfill
    \begin{subfigure}[b]{0.4\textwidth}
      \centering
      \captionsetup{justification=centering}
      \begin{tikzpicture}[scale=.4]
        \draw (3,2)node[bul,label=left:{$t$}](t){} (10,2)node[bul,label=right:{$s$}](s){};
       \draw (0,0)node[bul](lx){} (2,0)node[bul](rx){};
        \draw (4,1)node[bul,fill=blue!25](tt2){} (5,1)node[bul,fill=blue](ss1){} (8,1)node[bul,fill=blue!25](ss2){} (9,1)node[bul,fill=blue](tt1){}; 
        \draw (5.5,0)node[bul](ll){} (7.5,0)node[bul](lr){} (5.5,2)node[bul](ul){} (7.5,2)node[bul](ur){}; 
        \draw (11,0)node[bul](lz){} (13,0)node[bul](rz){};
        \draw (3,3.5)node[bul](t1){} (10,3.5)node[bul](s1){};
        \draw (3,5.5)node[bul](t2){} (10,5.5)node[bul](s2){};
        \draw (2,6.5)node[bul](t3){} (11,6.5)node[bul](s3){};
        \draw (6.5,4)node[bul](c1){} (6.5,6)node[bul](c2){};
        \draw (4.75,3.5)node[bul,fill=green!25](l3){} (4.75,4.5)node[bul,fill=green](ll4){} (4.75,5.3)node[bul,fill=green](l4){} (4.75,6)node[bul,fill=green!25](ll3){};
        \draw (8.25,3.5)node[bul,fill=red!25](r3){} (8.25,4.5)node[bul,fill=red](rr4){} (8.25,5.3)node[bul,fill=red](r4){} (8.25,6)node[bul,fill=red!25](rr3){};

        \draw [->] (t1)to(t);\draw [->] (t3)to(t2);\draw [->] (s)to(s1);\draw [->] (s2)to(s3);
        \draw [red,line width=2pt] (c1)--(c2) (t2)--node[color=black,anchor=east]{$g$}(t1) (s2)--node[color=black,anchor=west]{$f$}(s1);

        \draw [->] (rz)to(s3); \draw [->] (s3)to(t3); \draw [->] (t3)to(lx);
        \draw [->] (ll4)to(t1);\draw [->] (ll4)to(c1); \draw [->] (ll3)to(t2);\draw [->] (ll3)to(c2);
        \draw [->] (t1)to(l3); \draw [->] (c1)to(l3); \draw [->] (t2)to(l4);\draw [->] (c2)to(l4);

        \draw [->] (rr4)to(s1);\draw [->] (rr4)to(c1); \draw [->] (rr3)to(s2);\draw [->] (rr3)to(c2);
        \draw [->] (s1)to(r3); \draw [->] (c1)to(r3); \draw [->] (s2)to(r4);\draw [->] (c2)to(r4);
        \draw [->](ss1) to (ul);\draw [->] (ss1) to (ll);\draw [->](ss2) to (ur);\draw [->] (ss2) to (lr);
        \draw [->](ul) to (tt2);\draw [->] (ll) to (tt2);\draw [->](ur) to (tt1);\draw [->] (lr) to (tt1);
        \draw [->] (rx)to(t);\draw [->] (ul)to(t);\draw [->] (s)to(ur);\draw [->] (s)to(lz);

        \draw [red,line width=2pt] (ul)--node[lab,color=black]{$\tilde{e}^y_{C}$} (ur) (ll)--node[lab,color=black]{$\overline{ e}^y_{C}$} (lr);
        \draw [red,line width=2pt] (lx)--node[lab,color=black]{$\overline{ e}^x_{C}$} (rx) (lz)--node[lab,color=black]{$\overline{ e}^z_{C}$} (rz);
      \end{tikzpicture}
       \caption{Clause gadget}
      \label{fig:clause_gadget}
    \end{subfigure}
	\caption{(a) A flip gadget, used to construct edge gadgets, (b) a variable gadget with three positive and two negative occurrences and (c) a positive clause gadget (unlabeled $(s,t)$-pairs color-coded).}
    \label{fig:construction}
   \end{minipage}
  \end{figure}
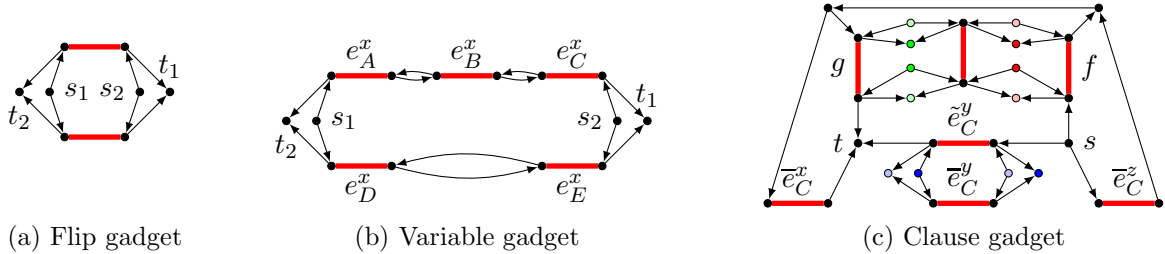

\section{Conclusion}
By a polynomial-time reduction from \textsc{Planar Monotone 3-SAT} we have shown that \textsc{Planar Steiner Orientation} is $\mathit{NP}$-hard.
Clearly, it also is in $\mathit{NP}$, which makes it $\mathit{NP}$-complete.

Future work involves proving $W[1]$-hardness or looking for approximation algorithms, connecting as many pairs as possible.
Graph classes with other restrictions could also be considered.

\paragraph{Acknowledgements}
We thank Andreas Feldmann for pointing us to this problem and we also thank the workshop HOMONOLO 2017 for the fruitful discussions.

\bibliography{bibliography}

\clearpage
\appendix
\section{Full Example}
We want to provide a small but complete example.
In Figure~\ref{fig:full-example} we give the incidence graph and the
\textsc{Planar Steiner Orientation} instance created using the gadgets introduced above corresponding to:
$ F = (X \vee Y) \wedge (\neg X \vee \neg Z \vee \neg W) \wedge (Y \vee Z \vee W) \wedge (\neg X \vee \neg Y \vee \neg Z) $.

\begin{figure}[h]
    \begin{subfigure}[b]{0.25\textwidth}
      \centering
        \begin{tikzpicture}
         \draw
         (0,0)node[circle,draw,inner sep=0pt,minimum width=.6cm,fill=green!50](X){$X$}
         (1,0)node[circle,draw,inner sep=0pt,minimum width=.6cm,fill=green!50](Y){$Y$}
         (2,0)node[circle,draw,inner sep=0pt,minimum width=.6cm,fill=green!50](Z){$Z$}
         (3,0)node[circle,draw,inner sep=0pt,minimum width=.6cm,fill=green!50](W){$W$};
         \draw (.4,1)node[rectangle,draw,inner sep=1pt,minimum width=1.2cm,fill=orange!50](XY){$X\vee Y$} (2.1,1)node[rectangle,draw,inner sep=1pt,minimum width=1.2cm,fill=orange!50](YZW){$Y\vee Z\vee W$};
         \draw (X)--(XY)--(Y) (Y)--(YZW)--(Z) (YZW)--(W);
         \draw (1,-1.5)node[rectangle,draw,inner sep=1pt,minimum width=1.2cm,fill=orange!50](XYZ){\makecell[c]{$\neg X\vee \neg Y$ \\ $\vee \neg Z$}} (1.6,-3)node[rectangle,draw,inner sep=1pt,minimum width=1.2cm,fill=orange!50](XZW){\makecell[c]{$\neg X\vee \neg Z$\\$\vee \neg W$}};
         \draw [bend left=45] (XZW)to(X);
         \draw [bend right=15] (XZW)to(Z);
         \draw (XZW)--(W) (X)--(XYZ)--(Y) (XYZ)--(Z);
        \end{tikzpicture}
		\caption{Incidence graph of $F$}
		\label{fig:full-example-incidence-graph}
    \end{subfigure}
    \hfill
    \begin{subfigure}[b]{0.7\textwidth}
      \centering
      \clearpage{}\begin{tikzpicture}[scale=0.2]
\begin{scope}[shift={(6.5,3.5)}] 
        \fill[orange!50] (-2,0)--(21.5,0)--(21.5,7)--(-2,7)--cycle;
         \draw (3,2)node[bul](t){} (10,2)node[bul](s){} (-1.5,0)node[bul](lx){} (.5,0)node[bul](rx){};
        \draw (4,1)node[bul,fill=blue!25](tt2){} (5,1)node[bul,fill=blue](ss1){} (8,1)node[bul,fill=blue!25](ss2){} (9,1)node[bul,fill=blue](tt1){}; 
        \draw (5.5,0)node[bul](ll){} (7.5,0)node[bul](lr){} (5.5,2)node[bul](ul){} (7.5,2)node[bul](ur){}; 
        \draw (19,0)node[bul](lz){} (21,0)node[bul](rz){}
        (3,3.5)node[bul](t1){} (10,3.5)node[bul](s1){};
        \draw (3,5.5)node[bul](t2){} (10,5.5)node[bul](s2){} (2,6.5)node[bul](t3){} (11,6.5)node[bul](s3){};
        \draw (6.5,4)node[bul](c1){} (6.5,6)node[bul](c2){};
        \draw (4.75,3.5)node[bul,fill=green!25](l3){} (4.75,4.5)node[bul,fill=green](ll4){} (4.75,5.3)node[bul,fill=green](l4){} (4.75,6)node[bul,fill=green!25](ll3){};
        \draw (8.25,3.5)node[bul,fill=red!25](r3){} (8.25,4.5)node[bul,fill=red](rr4){} (8.25,5.3)node[bul,fill=red](r4){} (8.25,6)node[bul,fill=red!25](rr3){};

        \draw [->] (t1)to(t);\draw [->] (t3)to(t2);\draw [->] (s)to(s1);\draw [->] (s2)to(s3);
        \draw [red,line width=2pt] (c1)--(c2) (t2)--(t1) (s2)--(s1);

        \draw [->] (rz)to(s3); \draw [->] (s3)to(t3); \draw [->] (t3)to(lx);
        \draw [->] (ll4)to(t1);\draw [->] (ll4)to(c1); \draw [->] (ll3)to(t2);\draw [->] (ll3)to(c2);
        \draw [->] (t1)to(l3); \draw [->] (c1)to(l3); \draw [->] (t2)to(l4);\draw [->] (c2)to(l4);

        \draw [->] (rr4)to(s1);\draw [->] (rr4)to(c1); \draw [->] (rr3)to(s2);\draw [->] (rr3)to(c2);
        \draw [->] (s1)to(r3); \draw [->] (c1)to(r3); \draw [->] (s2)to(r4);\draw [->] (c2)to(r4);
        \draw [->](ss1) to (ul);\draw [->] (ss1) to (ll);\draw [->](ss2) to (ur);\draw [->] (ss2) to (lr);
        \draw [->](ul) to (tt2);\draw [->] (ll) to (tt2);\draw [->](ur) to (tt1);\draw [->] (lr) to (tt1);
        \draw [->] (rx)to(t);\draw [->] (ul)to(t);\draw [->] (s)to(ur);\draw [->] (s)to(lz);

        \draw [red,line width=2pt] (ul)-- (ur) (ll)-- (lr);
        \draw [red,line width=2pt] (lx)-- (rx) (lz)-- (rz);
\end{scope}
\begin{scope}[shift={(-15,3.5)}] 
\fill[orange!50] (-.5,0)--(16,0)--(16,7)--(-.5,7)--cycle;
         \draw (3,2)node[bul](t){} (10,2)node[bul](s){} (0,0)node[bul](lx){} (2,0)node[bul](rx){};
        \draw (5.5,2)node[bul](ul){} (7.5,2)node[bul](ur){}; 
        \draw (13.5,0)node[bul](lz){} (15.5,0)node[bul](rz){} (3,3.5)node[bul](t1){} (10,3.5)node[bul](s1){};
        \draw (3,5.5)node[bul](t2){} (10,5.5)node[bul](s2){} (2,6.5)node[bul](t3){} (11,6.5)node[bul](s3){};
        \draw (6.5,4)node[bul](c1){} (6.5,6)node[bul](c2){};
        \draw (4.75,3.5)node[bul,fill=green!25](l3){} (4.75,4.5)node[bul,fill=green](ll4){} (4.75,5.3)node[bul,fill=green](l4){} (4.75,6)node[bul,fill=green!25](ll3){};
        \draw (8.25,3.5)node[bul,fill=red!25](r3){} (8.25,4.5)node[bul,fill=red](rr4){} (8.25,5.3)node[bul,fill=red](r4){} (8.25,6)node[bul,fill=red!25](rr3){};

        \draw [->] (t1)to(t);\draw [->] (t3)to(t2);\draw [->] (s)to(s1);\draw [->] (s2)to(s3);
        \draw [red,line width=2pt] (c1)--(c2) (t2)--(t1) (s2)--(s1);

        \draw [->] (rz)to(s3); \draw [->] (s3)to(t3); \draw [->] (t3)to(lx);
        \draw [->] (ll4)to(t1);\draw [->] (ll4)to(c1); \draw [->] (ll3)to(t2);\draw [->] (ll3)to(c2);
        \draw [->] (t1)to(l3); \draw [->] (c1)to(l3); \draw [->] (t2)to(l4);\draw [->] (c2)to(l4);

        \draw [->] (rr4)to(s1);\draw [->] (rr4)to(c1); \draw [->] (rr3)to(s2);\draw [->] (rr3)to(c2);
        \draw [->] (s1)to(r3); \draw [->] (c1)to(r3); \draw [->] (s2)to(r4);\draw [->] (c2)to(r4);
        \draw [->] (rx)to(t);\draw [->] (ul)to(t);\draw [->] (s)to(ur);\draw [->] (s)to(lz);

        \draw [->] (ul)to (ur);
        \draw [red,line width=2pt] (lx)-- (rx) (lz)-- (rz);
\end{scope}
\begin{scope}[shift={(-16.5,-4)}] 
 \fill[green!50] (0,-2)--(12,-2)--(12,2)--(0,2)--cycle;
  \draw (0,0)node[bul](t2){} (1,0)node[bul](s1){} (11,0)node[bul](s2){} (12,0)node[bul](t1){};
 \draw (1.5,-1.5)node[bul](ll){} (1.5,1.5)node[bul](ul){}
 (3.5,-1.5)node[bul](lm1){}
 (8,-1.5)node[bul](lm2){}
 (3.5,1.5)node[bul](um1){}
 (10,-1.5)node[bul](lr){} (10,1.5)node[bul](ur){};
 \draw [->](s1) to (ul);\draw [->] (s1) to (ll); \draw [->](s2) to (ur);\draw [->] (s2) to (lr);
 \draw [->](ul) to (t2);\draw [->] (ll) to (t2); \draw [->](ur) to (t1);\draw [->] (lr) to (t1);
 \draw [->,bend right=15](lm1) to (lm2); \draw [->,bend right=15](lm2) to (lm1);
 \draw [->,bend right=15](um1) to (ur);\draw [->,bend right=15](ur) to (um1);
 \draw [red, line width=2pt] (ul)-- (um1) (ll)--(lm1) (lm2)--(lr);
\end{scope}
\begin{scope}[shift={(-3,-4)}] 
\fill[green!50] (0,-2)--(12,-2)--(12,2)--(0,2)--cycle;
  \draw (0,0)node[bul](t2){} (1,0)node[bul](s1){} (11,0)node[bul](s2){} (12,0)node[bul](t1){};
 \draw (1.5,-1.5)node[bul](ll){} (1.5,1.5)node[bul](ul){}
 (3.5,-1.5)node[bul](lm1){}
 (3.5,1.5)node[bul](um1){}
 (8,1.5)node[bul](um2){}
 (10,-1.5)node[bul](lr){} (10,1.5)node[bul](ur){};
 \draw [->](s1) to (ul);\draw [->] (s1) to (ll); \draw [->](s2) to (ur);\draw [->] (s2) to (lr);
 \draw [->](ul) to (t2);\draw [->] (ll) to (t2); \draw [->](ur) to (t1);\draw [->] (lr) to (t1);
 \draw [->,bend right=15](lm1) to (lr); \draw [->,bend right=15](lr) to (lm1);
 \draw [->,bend right=15](um1) to (um2);\draw [->,bend right=15](um2) to (um1);
 \draw [red, line width=2pt] (ul)-- (um1) (um2)--(ur) (ll)--(lm1);
\end{scope}
\begin{scope}[shift={(10.5,-4)}] 
\fill[green!50] (0,-2)--(12,-2)--(12,2)--(0,2)--cycle;
  \draw (0,0)node[bul](t2){} (1,0)node[bul](s1){} (11,0)node[bul](s2){} (12,0)node[bul](t1){};
 \draw (1.5,-1.5)node[bul](ll){} (1.5,1.5)node[bul](ul){}
 (3.5,-1.5)node[bul](lm1){} (8,-1.5)node[bul](lm2){} (3.5,1.5)node[bul](um1){} (10,-1.5)node[bul](lr){} (10,1.5)node[bul](ur){};
 \draw [->](s1) to (ul);\draw [->] (s1) to (ll); \draw [->](s2) to (ur);\draw [->] (s2) to (lr);
 \draw [->](ul) to (t2);\draw [->] (ll) to (t2); \draw [->](ur) to (t1);\draw [->] (lr) to (t1);
 \draw [->,bend right=15](lm1) to (lm2); \draw [->,bend right=15](lm2) to (lm1);
 \draw [->,bend right=15](um1) to (ur);\draw [->,bend right=15](ur) to (um1);
 \draw [red, line width=2pt] (ul)-- (um1) (ll)--(lm1) (lm2)--(lr);
\end{scope}
\begin{scope}[shift={(24,-4)}] 
 \fill[green!50] (0,-2)--(5,-2)--(5,2)--(0,2)--cycle;
  \draw (0,0)node[bul](t2){} (1,0)node[bul](s1){} (4,0)node[bul](s2){} (5,0)node[bul](t1){};
 \draw (1.5,-1.5)node[bul](ll){} (1.5,1.5)node[bul](ul){}
 (3.5,-1.5)node[bul](lr){} (3.5,1.5)node[bul](ur){};
 \draw [->](s1) to (ul);\draw [->] (s1) to (ll); \draw [->](s2) to (ur);\draw [->] (s2) to (lr);
 \draw [->](ul) to (t2);\draw [->] (ll) to (t2); \draw [->](ur) to (t1);\draw [->] (lr) to (t1);
 \draw [red, line width=2pt] (ul)-- (ur) (ll)--(lr);
\end{scope}
\begin{scope}[shift={(-7,-11.5)},yscale=-1,xscale=1] 
\fill[orange!50] (-2,0)--(21.5,0)--(21.5,7)--(-2,7)--cycle;
         \draw (3,2)node[bul](t){} (10,2)node[bul](s){} (-1.5,0)node[bul](lx){} (.5,0)node[bul](rx){};
        \draw (4,1)node[bul,fill=blue!25](tt2){} (5,1)node[bul,fill=blue](ss1){} (8,1)node[bul,fill=blue!25](ss2){} (9,1)node[bul,fill=blue](tt1){}; 
        \draw (5.5,0)node[bul](ll){} (7.5,0)node[bul](lr){} (5.5,2)node[bul](ul){} (7.5,2)node[bul](ur){}; 
        \draw (19,0)node[bul](lz){} (21,0)node[bul](rz){}
        (3,3.5)node[bul](t1){} (10,3.5)node[bul](s1){};
        \draw (3,5.5)node[bul](t2){} (10,5.5)node[bul](s2){} (2,6.5)node[bul](t3){} (11,6.5)node[bul](s3){};
        \draw (6.5,4)node[bul](c1){} (6.5,6)node[bul](c2){};
        \draw (4.75,3.5)node[bul,fill=green!25](l3){} (4.75,4.5)node[bul,fill=green](ll4){} (4.75,5.3)node[bul,fill=green](l4){} (4.75,6)node[bul,fill=green!25](ll3){};
        \draw (8.25,3.5)node[bul,fill=red!25](r3){} (8.25,4.5)node[bul,fill=red](rr4){} (8.25,5.3)node[bul,fill=red](r4){} (8.25,6)node[bul,fill=red!25](rr3){};

        \draw [->] (t1)to(t);\draw [->] (t3)to(t2);\draw [->] (s)to(s1);\draw [->] (s2)to(s3);
        \draw [red,line width=2pt] (c1)--(c2) (t2)--(t1) (s2)--(s1);

        \draw [->] (rz)to(s3); \draw [->] (s3)to(t3); \draw [->] (t3)to(lx);
        \draw [->] (ll4)to(t1);\draw [->] (ll4)to(c1); \draw [->] (ll3)to(t2);\draw [->] (ll3)to(c2);
        \draw [->] (t1)to(l3); \draw [->] (c1)to(l3); \draw [->] (t2)to(l4);\draw [->] (c2)to(l4);

        \draw [->] (rr4)to(s1);\draw [->] (rr4)to(c1); \draw [->] (rr3)to(s2);\draw [->] (rr3)to(c2);
        \draw [->] (s1)to(r3); \draw [->] (c1)to(r3); \draw [->] (s2)to(r4);\draw [->] (c2)to(r4);
        \draw [->](ss1) to (ul);\draw [->] (ss1) to (ll);\draw [->](ss2) to (ur);\draw [->] (ss2) to (lr);
        \draw [->](ul) to (tt2);\draw [->] (ll) to (tt2);\draw [->](ur) to (tt1);\draw [->] (lr) to (tt1);
        \draw [->] (rx)to(t);\draw [->] (ul)to(t);\draw [->] (s)to(ur);\draw [->] (s)to(lz);

        \draw [red,line width=2pt] (ul)-- (ur) (ll)-- (lr);
        \draw [red,line width=2pt] (lx)-- (rx) (lz)-- (rz);
\end{scope}
\begin{scope}[shift={(13,-19)},yscale=-1,xscale=1] 
\fill[orange!50] (-28.5,0)--(15,0)--(15,7)--(-28.5,7)--cycle;
         \draw (3,2)node[bul](t){} (10,2)node[bul](s){} (-28,0)node[bul](lx){} (-26,0)node[bul](rx){};
        \draw (4,1)node[bul,fill=blue!25](tt2){} (5,1)node[bul,fill=blue](ss1){} (8,1)node[bul,fill=blue!25](ss2){} (9,1)node[bul,fill=blue](tt1){}; 
        \draw (5.5,0)node[bul](ll){} (7.5,0)node[bul](lr){} (5.5,2)node[bul](ul){} (7.5,2)node[bul](ur){}; 
        \draw (12.5,0)node[bul](lz){} (14.5,0)node[bul](rz){}
        (3,3.5)node[bul](t1){} (10,3.5)node[bul](s1){};
        \draw (3,5.5)node[bul](t2){} (10,5.5)node[bul](s2){} (2,6.5)node[bul](t3){} (11,6.5)node[bul](s3){};
        \draw (6.5,4)node[bul](c1){} (6.5,6)node[bul](c2){};
        \draw (4.75,3.5)node[bul,fill=green!25](l3){} (4.75,4.5)node[bul,fill=green](ll4){} (4.75,5.3)node[bul,fill=green](l4){} (4.75,6)node[bul,fill=green!25](ll3){};
        \draw (8.25,3.5)node[bul,fill=red!25](r3){} (8.25,4.5)node[bul,fill=red](rr4){} (8.25,5.3)node[bul,fill=red](r4){} (8.25,6)node[bul,fill=red!25](rr3){};

        \draw [->] (t1)to(t);\draw [->] (t3)to(t2);\draw [->] (s)to(s1);\draw [->] (s2)to(s3);
        \draw [red,line width=2pt] (c1)--(c2) (t2)--(t1) (s2)--(s1);

        \draw [->] (rz)to(s3); \draw [->] (s3)to(t3); \draw [->] (t3)to(lx);
        \draw [->] (ll4)to(t1);\draw [->] (ll4)to(c1); \draw [->] (ll3)to(t2);\draw [->] (ll3)to(c2);
        \draw [->] (t1)to(l3); \draw [->] (c1)to(l3); \draw [->] (t2)to(l4);\draw [->] (c2)to(l4);

        \draw [->] (rr4)to(s1);\draw [->] (rr4)to(c1); \draw [->] (rr3)to(s2);\draw [->] (rr3)to(c2);
        \draw [->] (s1)to(r3); \draw [->] (c1)to(r3); \draw [->] (s2)to(r4);\draw [->] (c2)to(r4);
        \draw [->](ss1) to (ul);\draw [->] (ss1) to (ll);\draw [->](ss2) to (ur);\draw [->] (ss2) to (lr);
        \draw [->](ul) to (tt2);\draw [->] (ll) to (tt2);\draw [->](ur) to (tt1);\draw [->] (lr) to (tt1);
        \draw [->] (rx)to(t);\draw [->] (ul)to(t);\draw [->] (s)to(ur);\draw [->] (s)to(lz);

        \draw [red,line width=2pt] (ul)-- (ur) (ll)-- (lr);
        \draw [red,line width=2pt] (lx)-- (rx) (lz)-- (rz);
\end{scope}
\begin{scope}[shift={(-10,-10)}]
 \draw (0,0)node[bul](lt2){} (1,0)node[bul](ls1){} (4,0)node[bul](ls2){} (5,0)node[bul](lt1){};
 \draw (1.5,-1.5)node[bul](ll){} (3.5,-1.5)node[bul](lr){}  (1.5,1.5)node[bul](cl){} (3.5,1.5)node[bul](cr){};
 \draw (1.5,4.5)node[bul](ul){} (3.5,4.5)node[bul](ur){};
 \draw (0,3)node[bul](ut2){} (1,3)node[bul](us1){} (4,3)node[bul](us2){} (5,3)node[bul](ut1){};
 \draw [->](ls1) to (cl);\draw [->] (ls1) to (ll); \draw [->](ls2) to (cr);\draw [->] (ls2) to (lr);
 \draw [->](cl) to (lt2);\draw [->] (ll) to (lt2); \draw [->](cr) to (lt1);\draw [->] (lr) to (lt1);
 \draw [->](us1) to (cl);\draw [->] (us1) to (ul); \draw [->](us2) to (cr);\draw [->] (us2) to (ur);
 \draw [->](ul) to (ut2);\draw [->] (cl) to (ut2); \draw [->](ur) to (ut1);\draw [->] (cr) to (ut1);
 \draw [red,line width=2pt] (ul) -- (ur) (cl) -- (cr) (ll) -- (lr);
\end{scope}
\begin{scope}[shift={(-16.5,-10)}]
 \draw (0,0)node[bul](lt2){} (1,0)node[bul](ls1){} (4,0)node[bul](ls2){} (5,0)node[bul](lt1){};
 \draw (1.5,-9)node[bul](ll){} (3.5,-9)node[bul](lr){}  (1.5,1.5)node[bul](cl){} (3.5,1.5)node[bul](cr){};
 \draw (1.5,4.5)node[bul](ul){} (3.5,4.5)node[bul](ur){};
 \draw (0,3)node[bul](ut2){} (1,3)node[bul](us1){} (4,3)node[bul](us2){} (5,3)node[bul](ut1){};
 \draw [->](ls1) to (cl);\draw [->] (ls1) to (ll); \draw [->](ls2) to (cr);\draw [->] (ls2) to (lr);
 \draw [->](cl) to (lt2);\draw [->] (ll) to (lt2); \draw [->](cr) to (lt1);\draw [->] (lr) to (lt1);
 \draw [->](us1) to (cl);\draw [->] (us1) to (ul); \draw [->](us2) to (cr);\draw [->] (us2) to (ur);
 \draw [->](ul) to (ut2);\draw [->] (cl) to (ut2); \draw [->](ur) to (ut1);\draw [->] (cr) to (ut1);
 \draw [red,line width=2pt] (ul) -- (ur) (cl) -- (cr) (ll) -- (lr);
\end{scope}
\begin{scope}[shift={(-16.5,-1)}]
 \draw (0,0)node[bul](lt2){} (1,0)node[bul](ls1){} (4,0)node[bul](ls2){} (5,0)node[bul](lt1){};
 \draw (1.5,-1.5)node[bul](ll){} (3.5,-1.5)node[bul](lr){}  (1.5,1.5)node[bul](cl){} (3.5,1.5)node[bul](cr){};
 \draw (1.5,4.5)node[bul](ul){} (3.5,4.5)node[bul](ur){};
 \draw (0,3)node[bul](ut2){} (1,3)node[bul](us1){} (4,3)node[bul](us2){} (5,3)node[bul](ut1){};
 \draw [->](ls1) to (cl);\draw [->] (ls1) to (ll); \draw [->](ls2) to (cr);\draw [->] (ls2) to (lr);
 \draw [->](cl) to (lt2);\draw [->] (ll) to (lt2); \draw [->](cr) to (lt1);\draw [->] (lr) to (lt1);
 \draw [->](us1) to (cl);\draw [->] (us1) to (ul); \draw [->](us2) to (cr);\draw [->] (us2) to (ur);
 \draw [->](ul) to (ut2);\draw [->] (cl) to (ut2); \draw [->](ur) to (ut1);\draw [->] (cr) to (ut1);
 \draw [red,line width=2pt] (ul) -- (ur) (cl) -- (cr) (ll) -- (lr);
\end{scope}
\begin{scope}[shift={(-3,-1)}]
 \draw (0,0)node[bul](lt2){} (1,0)node[bul](ls1){} (4,0)node[bul](ls2){} (5,0)node[bul](lt1){};
 \draw (1.5,-1.5)node[bul](ll){} (3.5,-1.5)node[bul](lr){}  (1.5,1.5)node[bul](cl){} (3.5,1.5)node[bul](cr){};
 \draw (1.5,4.5)node[bul](ul){} (3.5,4.5)node[bul](ur){};
 \draw (0,3)node[bul](ut2){} (1,3)node[bul](us1){} (4,3)node[bul](us2){} (5,3)node[bul](ut1){};
 \draw [->](ls1) to (cl);\draw [->] (ls1) to (ll); \draw [->](ls2) to (cr);\draw [->] (ls2) to (lr);
 \draw [->](cl) to (lt2);\draw [->] (ll) to (lt2); \draw [->](cr) to (lt1);\draw [->] (lr) to (lt1);
 \draw [->](us1) to (cl);\draw [->] (us1) to (ul); \draw [->](us2) to (cr);\draw [->] (us2) to (ur);
 \draw [->](ul) to (ut2);\draw [->] (cl) to (ut2); \draw [->](ur) to (ut1);\draw [->] (cr) to (ut1);
 \draw [red,line width=2pt] (ul) -- (ur) (cl) -- (cr) (ll) -- (lr);
\end{scope}
\begin{scope}[shift={(-3,-10)}]
 \draw (0,0)node[bul](lt2){} (1,0)node[bul](ls1){} (4,0)node[bul](ls2){} (5,0)node[bul](lt1){};
 \draw (1.5,-1.5)node[bul](ll){} (3.5,-1.5)node[bul](lr){}  (1.5,1.5)node[bul](cl){} (3.5,1.5)node[bul](cr){};
 \draw (1.5,4.5)node[bul](ul){} (3.5,4.5)node[bul](ur){};
 \draw (0,3)node[bul](ut2){} (1,3)node[bul](us1){} (4,3)node[bul](us2){} (5,3)node[bul](ut1){};
 \draw [->](ls1) to (cl);\draw [->] (ls1) to (ll); \draw [->](ls2) to (cr);\draw [->] (ls2) to (lr);
 \draw [->](cl) to (lt2);\draw [->] (ll) to (lt2); \draw [->](cr) to (lt1);\draw [->] (lr) to (lt1);
 \draw [->](us1) to (cl);\draw [->] (us1) to (ul); \draw [->](us2) to (cr);\draw [->] (us2) to (ur);
 \draw [->](ul) to (ut2);\draw [->] (cl) to (ut2); \draw [->](ur) to (ut1);\draw [->] (cr) to (ut1);
 \draw [red,line width=2pt] (ul) -- (ur) (cl) -- (cr) (ll) -- (lr);
\end{scope}
\begin{scope}[shift={(3.5,-1)}]
 \draw (0,0)node[bul](lt2){} (1,0)node[bul](ls1){} (4,0)node[bul](ls2){} (5,0)node[bul](lt1){};
 \draw (1.5,-1.5)node[bul](ll){} (3.5,-1.5)node[bul](lr){}  (1.5,1.5)node[bul](cl){} (3.5,1.5)node[bul](cr){};
 \draw (1.5,4.5)node[bul](ul){} (3.5,4.5)node[bul](ur){};
 \draw (0,3)node[bul](ut2){} (1,3)node[bul](us1){} (4,3)node[bul](us2){} (5,3)node[bul](ut1){};
 \draw [->](ls1) to (cl);\draw [->] (ls1) to (ll); \draw [->](ls2) to (cr);\draw [->] (ls2) to (lr);
 \draw [->](cl) to (lt2);\draw [->] (ll) to (lt2); \draw [->](cr) to (lt1);\draw [->] (lr) to (lt1);
 \draw [->](us1) to (cl);\draw [->] (us1) to (ul); \draw [->](us2) to (cr);\draw [->] (us2) to (ur);
 \draw [->](ul) to (ut2);\draw [->] (cl) to (ut2); \draw [->](ur) to (ut1);\draw [->] (cr) to (ut1);
 \draw [red,line width=2pt] (ul) -- (ur) (cl) -- (cr) (ll) -- (lr);
\end{scope}
\begin{scope}[shift={(10.5,-1)}]
 \draw (0,0)node[bul](lt2){} (1,0)node[bul](ls1){} (4,0)node[bul](ls2){} (5,0)node[bul](lt1){};
 \draw (1.5,-1.5)node[bul](ll){} (3.5,-1.5)node[bul](lr){}  (1.5,1.5)node[bul](cl){} (3.5,1.5)node[bul](cr){};
 \draw (1.5,4.5)node[bul](ul){} (3.5,4.5)node[bul](ur){};
 \draw (0,3)node[bul](ut2){} (1,3)node[bul](us1){} (4,3)node[bul](us2){} (5,3)node[bul](ut1){};
 \draw [->](ls1) to (cl);\draw [->] (ls1) to (ll); \draw [->](ls2) to (cr);\draw [->] (ls2) to (lr);
 \draw [->](cl) to (lt2);\draw [->] (ll) to (lt2); \draw [->](cr) to (lt1);\draw [->] (lr) to (lt1);
 \draw [->](us1) to (cl);\draw [->] (us1) to (ul); \draw [->](us2) to (cr);\draw [->] (us2) to (ur);
 \draw [->](ul) to (ut2);\draw [->] (cl) to (ut2); \draw [->](ur) to (ut1);\draw [->] (cr) to (ut1);
 \draw [red,line width=2pt] (ul) -- (ur) (cl) -- (cr) (ll) -- (lr);
\end{scope}
\begin{scope}[shift={(10.5,-10)}]
 \draw (0,0)node[bul](lt2){} (1,0)node[bul](ls1){} (4,0)node[bul](ls2){} (5,0)node[bul](lt1){};
 \draw (1.5,-1.5)node[bul](ll){} (3.5,-1.5)node[bul](lr){}  (1.5,1.5)node[bul](cl){} (3.5,1.5)node[bul](cr){};
 \draw (1.5,4.5)node[bul](ul){} (3.5,4.5)node[bul](ur){};
 \draw (0,3)node[bul](ut2){} (1,3)node[bul](us1){} (4,3)node[bul](us2){} (5,3)node[bul](ut1){};
 \draw [->](ls1) to (cl);\draw [->] (ls1) to (ll); \draw [->](ls2) to (cr);\draw [->] (ls2) to (lr);
 \draw [->](cl) to (lt2);\draw [->] (ll) to (lt2); \draw [->](cr) to (lt1);\draw [->] (lr) to (lt1);
 \draw [->](us1) to (cl);\draw [->] (us1) to (ul); \draw [->](us2) to (cr);\draw [->] (us2) to (ur);
 \draw [->](ul) to (ut2);\draw [->] (cl) to (ut2); \draw [->](ur) to (ut1);\draw [->] (cr) to (ut1);
 \draw [red,line width=2pt] (ul) -- (ur) (cl) -- (cr) (ll) -- (lr);
\end{scope}
\begin{scope}[shift={(17,-10)}]
 \draw (0,0)node[bul](lt2){} (1,0)node[bul](ls1){} (4,0)node[bul](ls2){} (5,0)node[bul](lt1){};
 \draw (1.5,-9)node[bul](ll){} (3.5,-9)node[bul](lr){}  (1.5,1.5)node[bul](cl){} (3.5,1.5)node[bul](cr){};
 \draw (1.5,4.5)node[bul](ul){} (3.5,4.5)node[bul](ur){};
 \draw (0,3)node[bul](ut2){} (1,3)node[bul](us1){} (4,3)node[bul](us2){} (5,3)node[bul](ut1){};
 \draw [->](ls1) to (cl);\draw [->] (ls1) to (ll); \draw [->](ls2) to (cr);\draw [->] (ls2) to (lr);
 \draw [->](cl) to (lt2);\draw [->] (ll) to (lt2); \draw [->](cr) to (lt1);\draw [->] (lr) to (lt1);
 \draw [->](us1) to (cl);\draw [->] (us1) to (ul); \draw [->](us2) to (cr);\draw [->] (us2) to (ur);
 \draw [->](ul) to (ut2);\draw [->] (cl) to (ut2); \draw [->](ur) to (ut1);\draw [->] (cr) to (ut1);
 \draw [red,line width=2pt] (ul) -- (ur) (cl) -- (cr) (ll) -- (lr);
\end{scope}
\begin{scope}[shift={(24,-10)}]
 \draw (0,0)node[bul](lt2){} (1,0)node[bul](ls1){} (4,0)node[bul](ls2){} (5,0)node[bul](lt1){};
 \draw (1.5,-9)node[bul](ll){} (3.5,-9)node[bul](lr){}  (1.5,1.5)node[bul](cl){} (3.5,1.5)node[bul](cr){};
 \draw (1.5,4.5)node[bul](ul){} (3.5,4.5)node[bul](ur){};
 \draw (0,3)node[bul](ut2){} (1,3)node[bul](us1){} (4,3)node[bul](us2){} (5,3)node[bul](ut1){};
 \draw [->](ls1) to (cl);\draw [->] (ls1) to (ll); \draw [->](ls2) to (cr);\draw [->] (ls2) to (lr);
 \draw [->](cl) to (lt2);\draw [->] (ll) to (lt2); \draw [->](cr) to (lt1);\draw [->] (lr) to (lt1);
 \draw [->](us1) to (cl);\draw [->] (us1) to (ul); \draw [->](us2) to (cr);\draw [->] (us2) to (ur);
 \draw [->](ul) to (ut2);\draw [->] (cl) to (ut2); \draw [->](ur) to (ut1);\draw [->] (cr) to (ut1);
 \draw [red,line width=2pt] (ul) -- (ur) (cl) -- (cr) (ll) -- (lr);
\end{scope}

\begin{scope}[shift={(24,-1)}]
 \draw (0,0)node[bul](lt2){} (1,0)node[bul](ls1){} (4,0)node[bul](ls2){} (5,0)node[bul](lt1){};
 \draw (1.5,-1.5)node[bul](ll){} (3.5,-1.5)node[bul](lr){}  (1.5,1.5)node[bul](cl){} (3.5,1.5)node[bul](cr){};
 \draw (1.5,4.5)node[bul](ul){} (3.5,4.5)node[bul](ur){};
 \draw (0,3)node[bul](ut2){} (1,3)node[bul](us1){} (4,3)node[bul](us2){} (5,3)node[bul](ut1){};
 \draw [->](ls1) to (cl);\draw [->] (ls1) to (ll); \draw [->](ls2) to (cr);\draw [->] (ls2) to (lr);
 \draw [->](cl) to (lt2);\draw [->] (ll) to (lt2); \draw [->](cr) to (lt1);\draw [->] (lr) to (lt1);
 \draw [->](us1) to (cl);\draw [->] (us1) to (ul); \draw [->](us2) to (cr);\draw [->] (us2) to (ur);
 \draw [->](ul) to (ut2);\draw [->] (cl) to (ut2); \draw [->](ur) to (ut1);\draw [->] (cr) to (ut1);
 \draw [red,line width=2pt] (ul) -- (ur) (cl) -- (cr) (ll) -- (lr);
\end{scope}
\end{tikzpicture}
\clearpage{}
	  \caption{Corresponding \textsc{Planar Steiner Orientation} instance}
		\label{fig:full-example-orientation-instance}
    \end{subfigure}

	\caption{A full example showing the reduction from \textsc{Planar Monotone 3-SAT} to \textsc{Planar Steiner Orientation}. Variables and variable gadgets are highlighted in green, clauses and clause gadgets in orange. Negative clause gadgets are mirrored vertically.}
	\label{fig:full-example}
\end{figure}
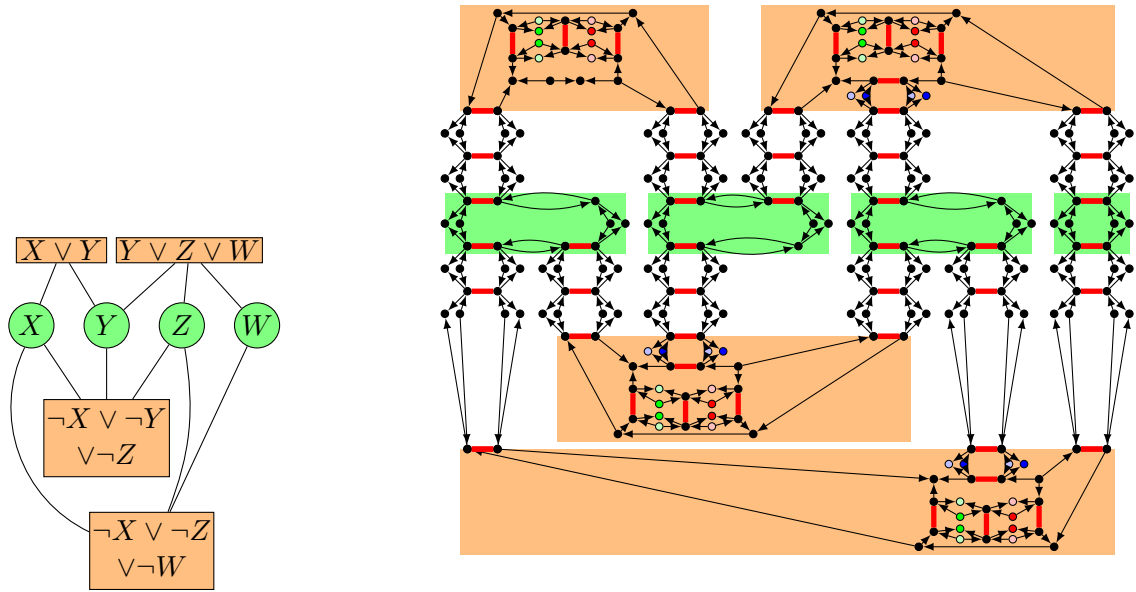

\end{document}